\newcommand{\cegkr}{CEGkR\xspace}
\newtheorem{proposition}{Proposition}
\begin{document}

\sloppy
\title{Learning Distributed Controllers\\ for V-Formation}

\makeatletter
\newcommand{\linebreakand}{%
  \end{@IEEEauthorhalign}
  \hfill\mbox{}\par
  \mbox{}\hfill\begin{@IEEEauthorhalign}
}
\makeatother

\author{\IEEEauthorblockN{Shouvik Roy}
\IEEEauthorblockA{\textit{Department of Computer Science} \\
\textit{Stony Brook University} \\
Stony Brook NY, USA \\
shroy@cs.stonybrook.edu}
\and
\IEEEauthorblockN{Usama Mehmood}
\IEEEauthorblockA{\textit{Department of Computer Science} \\
\textit{Stony Brook University} \\
Stony Brook NY, USA \\
umehmood@cs.stonybrook.edu}
\and
\IEEEauthorblockN{Radu Grosu}
\IEEEauthorblockA{\textit{Cyber-Physical Systems Group} \\
\textit{Technische Universitat Wien} \\
Wien, Austria \\
radu.grosu@tuwien.ac.at}
\linebreakand
\IEEEauthorblockN{Scott A. Smolka}
\IEEEauthorblockA{\textit{Department of Computer Science} \\
\textit{Stony Brook University} \\
Stony Brook NY, USA \\
sas@cs.stonybrook.edu}
\and
\IEEEauthorblockN{Scott D. Stoller}
\IEEEauthorblockA{\textit{Department of Computer Science} \\
\textit{Stony Brook University} \\
Stony Brook NY, USA \\
stoller@cs.stonybrook.edu}
\and
\IEEEauthorblockN{Ashish Tiwari}
\IEEEauthorblockA{\textit{Microsoft Research} \\
San Francisco CA, USA \\
ashish.tiwari@microsoft.com}
}

% \author{Shouvik Roy\\ Stony Brook University\\ \texttt{shroy@cs.stonybrook.edu} 
% \and 
% Usama Mehmood\\ Stony Brook University\\ \texttt{umehmood@cs.stonybrook.edu} 
%   \and 
%   Scott D. Stoller\\ Stony Brook University\\ \texttt{stoller@cs.stonybrook.edu}  \and 
% Radu Grosu\\ Vienna University of Technology\\ \texttt{radu.grosu@tuwien.ac.at}   \and 
% Ashish Tiwari\\ Microsoft Research\\ \texttt{ashish.tiwari@microsoft.com}
% \and 
% Scott A. Smolka\\ Stony Brook University\\ \texttt{sas@cs.stonybrook.edu}}

\maketitle
\vspace{-2.5ex}
\begin{abstract}
We show how a high-performing, fully distributed and symmetric neural V-formation controller can be synthesized from a Centralized MPC (Model Predictive Control) controller using Deep Learning.  This result is significant as we also establish that under very reasonable conditions, it is impossible to achieve V-formation using a deterministic, distributed, and symmetric controller.  The learning process we use for the neural V-formation controller is significantly enhanced by CEGkR, a \emph{Counterexample-Guided $k$-fold Retraining} technique we introduce, which extends prior work in this direction in important ways. Our experimental results show that our neural V-formation controller generalizes to a significantly larger number of agents than for which it was trained (from~7 to~15), and   exhibits substantial speedup over the MPC-based controller.  We use a form of statistical model checking to compute confidence intervals for our neural V-formation controller's convergence rate and time to convergence.
\end{abstract}

\begin{IEEEkeywords}
V-Formation, Model Predictive Control, Distributed Neural Controller, Deep Neural Network, Supervised Learning.
\end{IEEEkeywords}

\section{Introduction}
\label{sec:intro}

Designing distributed controllers is a challenging task, as the associated agents are typically attempting to achieve a global objective despite only having a local view of the global configuration.  They must therefore take actions based on incomplete information. Often it is not possible to optimize for global objectives using locally-optimal actions alone. High-performing distributed controllers may thus need to employ information-sharing among non-neighbors via complicated protocols, such as distributed consensus.

This state of affairs raises the following question.  Rather than manually designing distributed controllers, can we automatically learn them?  If so, how would we obtain the requisite training data without already having a solution for the distributed control problem in hand?

In this paper, we explore the use of a centralized controller, with global system knowledge, to generate the training data needed to learn a fully distributed neural controller. It is not obvious that this approach would work, since learning a high-performing distributed controller would require the learning process to (implicitly) figure out a way to gather information from non-neighbors.  Moreover, there is nothing in the training data that suggests how to perform such consensus tasks. A priori, we do not even know if such information sharing is possible in the distributed setting without explicit communication between agents.

To investigate this idea, we consider a particularly challenging multi-agent flight-formation problem: \emph{V-formation}, an emergent behavior of significant interest to the aerospace industry. The V-formation problem refers to the task of bringing a collection of agents from an arbitrary initial state to a state where they are all flying in a V-shape, with one agent leading the group and the others following on the left and right branches of the~V. V-formation provides numerous benefits. It is historically known for being energy-efficient due to the \emph{upwash benefit} an agent in the configuration enjoys from its frontal neighbor.  It also offers each agent a clear frontal view, unobstructed by any flock-mate.

The V-formation problem has been shown to be one of optimal control, which can be solved using model predictive control (MPC)~\cite{CAMACHO2007}. Section~\ref{sec:related} discusses various approaches that have been proposed to solve this problem. In particular, there exist centralized~\cite{ARES} and (partially) distributed~\cite{Lukina2019} solutions for achieving V-formation using MPC. None of these approaches, however, lead to a truly distributed solution for V-formation, i.e., without any form of consensus or information-sharing among non-neighbors. Specifically, the distributed solutions in prior work have three shortcomings. First, the distributed  controller~\cite{Lukina2019} uses a consensus round at the beginning of every time step, so that all agents agree on a consistent set of actions. % information about the global configuration could be propagated to all agents.  
This augmented controller performs tasks similar to leader election in the process.
Second, the controller uses \emph{adaptive neighborhood resizing} to enable agents to increase their neighborhood sizes to ensure convergence to a V-formation. MPC-based controllers can be computationally expensive, and increasing the neighborhood size increases the computational cost.  Third, each control step consisted of many ministeps where agents exchanged information and solved multiple optimization problems leading to a complicated procedure overall.

In this paper, we present \emph{Neural V-formation}, a new approach to the V-formation problem that uses Supervised Learning and a retraining technique we introduce called \emph{Counterexample-Guided $k$-fold Retraining} to learn a symmetric and fully distributed controller from a centralized, adaptive-horizon MPC controller~\cite{ARES}. By doing so, we achieve the best of both worlds: high performance on par with the MPC controllers, and high efficiency, which leads to real-time flight controllers.

Notably, we also show how our neural V-formation controller generalizes to a significantly larger number of agents (up to 15) than the number of agents on which it is trained (only 7). This generalization by our neural V-formation controller is achieved using only local neighborhood information and a local cost-function value, without any communication with other agents.  Our experiments demonstrate that attempting to use a distributed MPC controller (without explicit communication or consensus) to achieve this level of generalization does not yield satisfactory results and is computationally more expensive.

Figure~\ref{fig:architecture} provides an overview of our approach. A high-performing, centralized, adaptive-horizon MPC controller (CAMPC) provides the labeled training data to the learning agent: a symmetric and fully distributed neural controller (DNC) in the form of a Deep Neural Network (DNN). The training data consists of trajectories of state-action pairs, where a state contains the information known to an agent at a time step (its position and velocity, the positions and velocities of its neighbors, and the value of its local cost function), and the action (the label) is the acceleration assigned to that agent at that time step by the CAMPC controller. 

The key point here is that the CAMPC controller uses knowledge of the full state (positions and velocities of all agents) to find the optimal action for each agent, whereas the DNC controller is trained to compute the same output action only from information about its local state. The DNC has to do more than just a table lookup over the training data: it has to learn a function that uses only locally sensed data to compute the optimal action such that the same DNC works for all agents (and their local views) at all times.

The learning process we use for neural V-formation is significantly enhanced through the introduction of \emph{Counterexample-Guided $k$-fold Retraining} (CEGkR).  In this context, a counterexample is a trajectory along which the neural controller failed to achieve V-formation.  CEGkR utilizes the first $k$ states of such failed trajectories as retraining samples, repeating this process until the desired performance of the neural controller is attained.  In terms of verification of our neural controller, we use a form of statistical model checking \cite{Larsen14,Grosu14} to compute confidence intervals for its rate of convergence to a V-formation and for its time to convergence.

The rest of the paper is organized as follows. Section~\ref{sec:background} describes the model dynamics and the AMPC algorithm, including its cost function for V-formation. Section~\ref{sec:impossibility} presents impossibility results that illustrate the difficulty of achieving V-formation through distributed control. Section~\ref{sec:neural} introduces our distributed neural controller for V-formation, and the associated learning process, with a focus on Counterexample-Guided $k$-fold Retraining. Section~\ref{sec:exp} contains experimental results comparing our neural controller with MPC-based controllers, along with our statistical model checking results. Section~\ref{sec:related} discusses related work, while Section~\ref{sec:con} offers concluding remarks.

\begin{figure}[t]
    \centering
    \includegraphics[width=9cm]{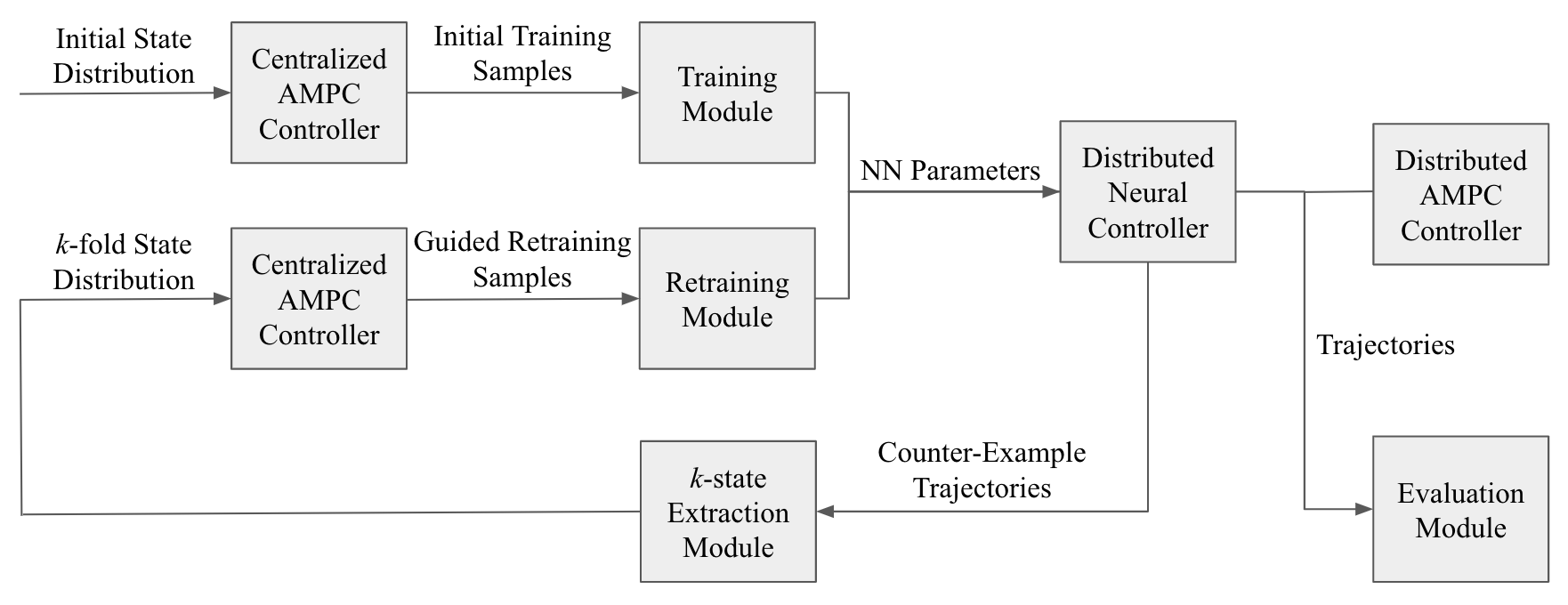}
    \caption{Neural V-formation Architecture}
    \label{fig:architecture}
    \vspace*{-2ex}
\end{figure}
\section{Background}
\label{sec:background}

We describe the model for our agents (including their equations of motion), the Centralized Adaptive-horizon MPC controller used to generate training data, and the distributed variant of the MPC controller with which we compare the performance of our neural V-formation controller.

The state of agent $i$ consists of four variables: a 2-dimensional vector $x_i$ giving the agent's position in 2D  space, and a 2-dimensional vector $v_i$ giving the agent's velocity. The state of a collection of $n$ agents is denoted $s = \{x_i,v_i \}_{i=1}^n$.  The control input, also called ``action'', for agent $i$ is a 2-dimensional acceleration denoted  $a_i$.

Let $x_i(t)$, $v_i(t)$ and $a_i(t)$ be the 2-dimensional positions, velocities and accelerations, respectively, of agent $i$ at time step $t$, $i \in \{1, \ldots, n\}$. The discrete-time equations of motion for agent $i$ are:
\begin{align}
    x_i(t+1) &= x_i(t) + dt \cdot v_i(t) \\
    v_i(t + 1) &= v_i(t) + dt \cdot a_i(t)
\end{align}
where $dt$ is the duration of a time step.

The goal of V-formation is to compute control actions (accelerations for the $n$ agents) that drive the system from an initial state (picked arbitrarily from some reasonable set of initial states) to a desired target state (a V-formation). We assume the desired final state is specified by a cost function, $J(s)$, that maps a state $s$ to a real-valued cost such that $J(s)=0$ exactly when $s$ represents the desired target state (V-formation), and $J(s) > 0$ otherwise. Further details about $J(s)$ are given below.

A \emph{Centralized Adaptive-horizon Model Predictive Control} (CAMPC) algorithm is proposed in~\cite{ARES}.  CAMPC generates action (acceleration) sequences using an adaptive prediction horizon $h$ to find the next action to execute towards the global optimum. CAMPC maintains multiple clones of the current state, and runs Particle Swarm Optimization (PSO)~\cite{kennedy95} on each of them. This allows it to call PSO for each clone with a different prediction horizon $h$.

CAMPC performs a system-wide minimization of the global cost function $J$ (defined in Eq.~\ref{eq:globalcost}) at each time-step to obtain an optimal action sequence of length $h$. The optimization is subject to the following constraints on the maximum velocities and accelerations:
\begin{align}
\label{eq:opt}
\forall i \in \{1,\,.\,.\,.\, n\},\,\, \Vert v_i(t)\Vert \leq v_{max} \,\land\, \Vert a_i (t) \Vert  \leq \rho \Vert v_i(t) \Vert
\end{align}
where $v_{max}$ is a constant and $\rho \in (0, 1)$.  PSO creates a swarm of $p$ particles uniformly at random within the given bounds on their positions and velocities. It then computes the fitness (the value of the cost function) of each particle. The fittest particle becomes a global best for the next iteration.  This procedure is repeated until the number of iterations reaches its maximum, a time limit is reached, or the cost function reaches its minimum value (i.e., a V-formation is achieved).

The adaptive prediction horizons are chosen such that the best-performing PSO instances succeed to decrease the objective cost by at least a pre-defined amount.  The adaptive-horizon feature allows PSO to escape from local minima by gradually increasing the MPC prediction horizon when necessary.  This provides convergence guarantees that would otherwise be impossible.

In~\cite{Lukina2019}, a distributed version of MPC is used to solve the V-formation problem, albeit with a reliance on a distributed consensus algorithm.  It deploys adaptive neighborhood resizing and an adaptive-horizon version of MPC to determine the optimal action (acceleration) for every agent at every time-step. Our comparative performance evaluation considers a modified version of this controller: \emph{Distributed Adaptive-Horizon Model Predictive Control} (DAMPC), which uses the adaptive-horizon feature of~\cite{Lukina2019}, but eschews any form of consensus.  This is to ensure a fair comparison with our neural controller, which is also ``consensus free''.  
DAMPC does not use the adaptive-neighborhood feature in~\cite{Lukina2019}; instead, it uses a fixed neighborhood size of 7 agents, just like our neural controller.  At any time-step, an agent's neighborhood consists of the 7 nearest agents, including itself.
Thus, in DAMPC, each agent $i$ computes the optimal action sequences for the agents in its neighborhood, and then uses the first acceleration in the sequence for itself. The accelerations are computed using PSO, as in CAMPC, except that the scope of the cost function is restricted to agents in $i$'s neighborhood, instead of all agents.

\begin{figure}[t]
\centering	
\subfloat[$\textit{CV}$]{\includegraphics[width=3cm]{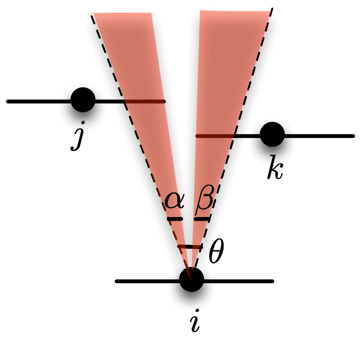}}
\subfloat[$\textit{VM}$]{\includegraphics[width=3cm]{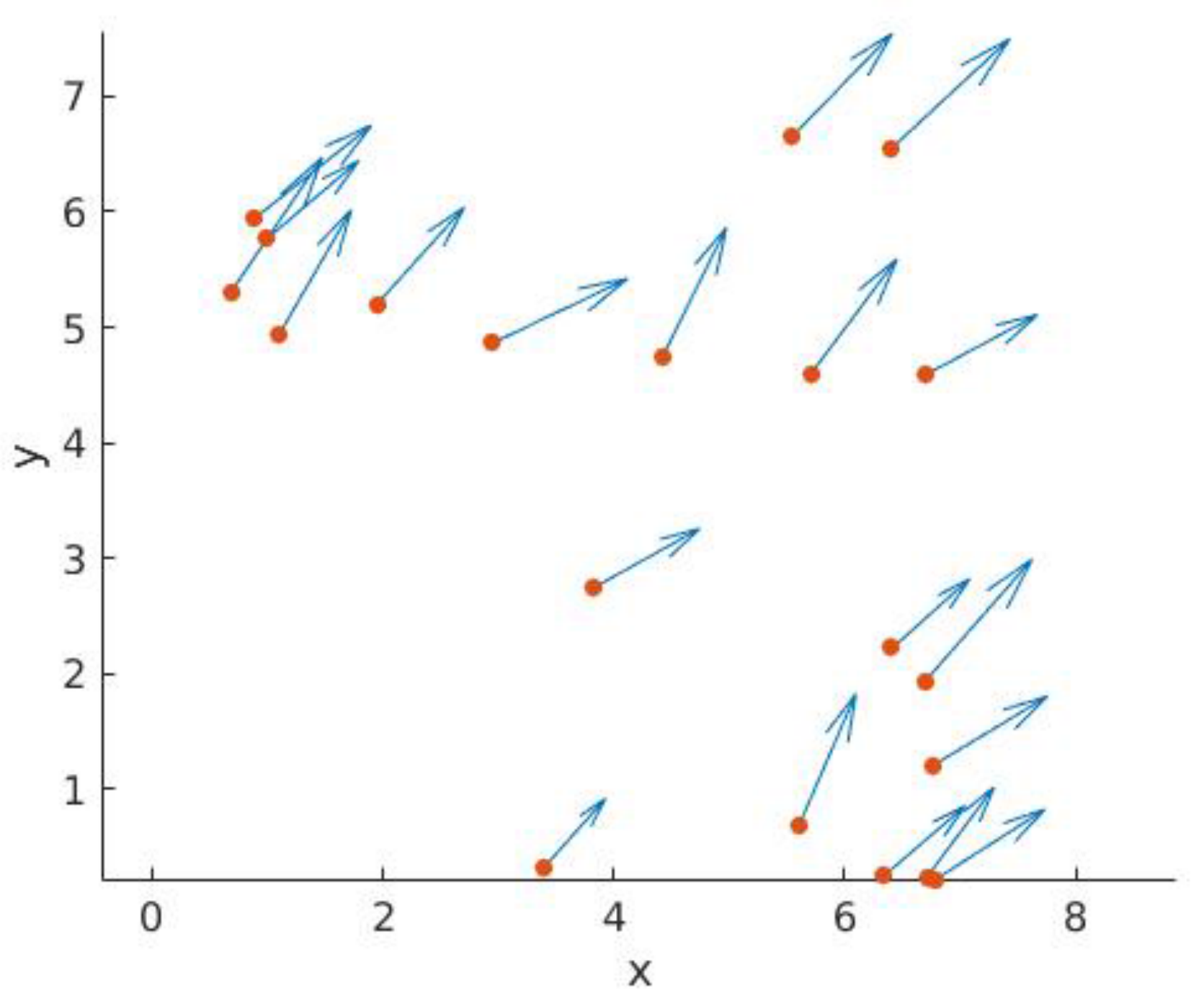}}
\subfloat[$\textit{UB}$]{\includegraphics[width=3cm]{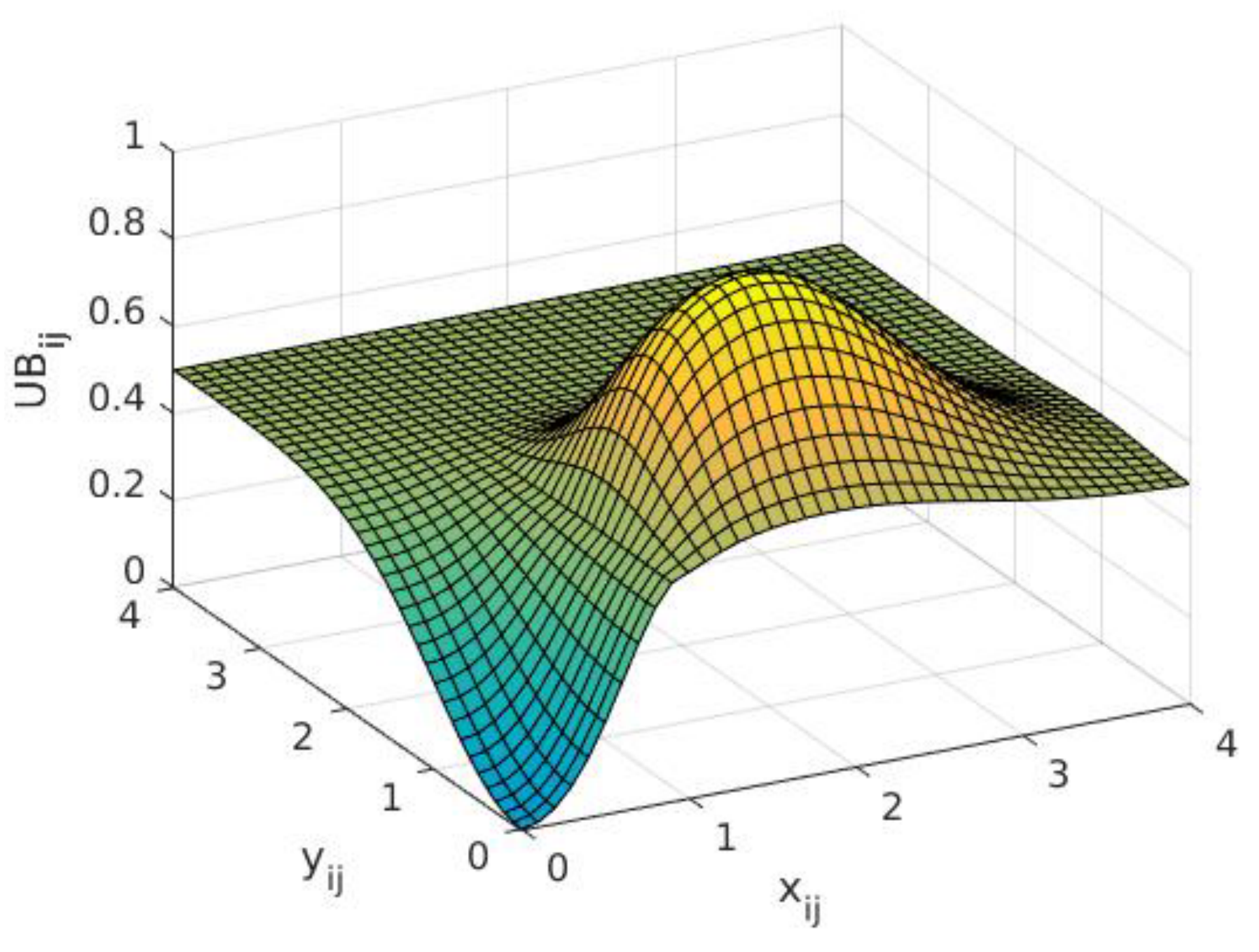}}

\vspace*{-1mm}
\caption{
(a) Agent $i$'s view is partially blocked by agents $j$ and $k$. Hence, its $\textit{CV} = \,(\alpha\,{+}\,\beta)/\theta$. 
(b) A flock and its unaligned agent velocities results in a velocity-matching metric $\textit{VM} = 6.2805$. In contrast, $\textit{VM} = 0$ when the velocities of all agents are aligned. 
(c) Illustration of the upwash benefit agent $i$ receives from agent $j$ depending on how it is positioned behind agent $j$.  Note that  agent $j$'s downwash region is directly behind it.
}
\label{fig:fitness}
\vspace*{-4mm}
\end{figure}

The global cost function $J(s)$, for state $s$, used in CAMPC for capturing V-formation, is defined in terms of the following metrics~\cite{YANG2016}.

\emph{Clear View}: An agent’s visual field is a cone with angle $\theta$ that can be blocked by the wings of other agents.  The clear-view metric $\textit{CV}(s)$ is defined as the sum over all agents $j$ of the percentage of agent $j$’s visual field that is blocked by other agents. 
Let $B_{ij}(x_i, v_i, x_j)$ be the part of the angle subtended by the wing of agent $j$ on the view of agent $i$ that intersects with agent $i$'s visual cone with angle $\theta$. Then, the clear view for agent $i$, $\textit{CV}_i(x, v)$, is defined as $|\cup_{j\neq i} B_{ij}(x_i, v_i, x_j)|/\theta$, and the total clear view, $\textit{CV}(s)$, is defined as $\sum_i \textit{CV}_i(x, v)$. The optimal value in a V-formation is $\textit{CV}^*{=}\,0$, as all agents have a clear view.
%The minimum value \textit{CV}$^\ast = 0$ is attained in a perfect V-formation (and also in many other formations) where all agents have completely unobstructed views.

\emph{Velocity Matching}: $\textit{VM}(s)$ is defined as the accumulated differences between the velocity of each agent and all other agents, summed up over all agents. Formally. $\textit{VM}(s) = \sum_{i > j} ({\Vert v_i-v_j \Vert}/{(\Vert v_i \Vert + \Vert v_j \Vert)})^2$. The minimum value is \textit{VM}$^\ast = 0$ is attained when all agents have the same velocity. 

\emph{Upwash Benefit}: $\textit{UB}(s)$ is the sum of (the inverse of) each agent's upwash benefit. A trailing upwash effect is generated near the wingtips of an agent. An upwash measure is defined using a Gaussian model that peaks at the appropriate upwash regions. 
Let $h_{ij}$ be the projection of the vector $x_j - x_i$ along the wing-span of agent $i$. Similarly, let $g_{ij}$ be the projection of $x_j - x_i$ along the direction of $v_i$. Specifically, the upwash benefit $\textit{UB}_{ij}$ for agent $i$ coming from agent $j$ is given by
\begin{eqnarray}
\label{eq:ub}
    \textit{UB}_{ij} = \left\{
    \begin{array}{ll}
    \alpha  S(|h_{ij}|)\, \mathbb{G}_{ij} & \mbox{ if } |h_{ij}| \geq \frac{(4-\pi)w}{8} \wedge g_{ij} > 0 \\
    S(|h_{ij}|)\, \mathbb{G}_{ij} & \mbox{ if } |h_{ij}| < \frac{(4-\pi)w}{8} \wedge g_{ij} > 0 \\
    0 & \mbox{ otherwise}
    \end{array} \right.
\end{eqnarray}
where $S(z) = \mathtt{erf}(2\sqrt{2}(z - \frac{(4-\pi)w}{8}))$ is the error function, which is a smooth approximation of the sign function, $\mathbb{G}_{ij} = G(h_{ij}, g_{ij}, \mu_1, \Sigma_1)$. $G(y,z,\mu,\Sigma) = G_1( [\vert y \vert, \vert z \vert] - \mu, \Sigma)$ is a 2D-Gaussian shifted so that the mean is $\mu$, where $G_1(\vec{z}, \Sigma) = e^{(-\frac{1}{2}(\vec{z}^T \Sigma^{-1}\vec{z}))}$ is a 2D-Gaussian with mean at the origin. The parameter $w$ is the wing span, and $\mu_1 = [(12+\pi)w/16,1]$ is the relative position where upwash benefit is maximized. The total upwash benefit, $\textit{UB}_i$, for agent $i$ is $\sum_{j\neq 1}\textit{UB}_{ij}$. The maximum upwash an agent can obtain is upper-bounded by~1.  Since we are working with cost (that we want to minimize), we define $\textit{UB}(s) = \sum_i (1 - \min(\textit{UB}_i, 1))$. 
The upwash benefit in a V-formation is \textit{UB}$^\ast = 1$, as all agents, except for the leader, enjoy maximum upwash benefit.

The overall cost function $J(s)$ is be defined as a sum of squares:
\vspace{-1ex}
\begin{multline}
    \label{eq:globalcost}
    J(s) =  (\textit{CV}(s)-\textit{CV} ^\ast)^2+(\textit{VM}(s)-\textit{VM} ^\ast)^2 \\ +(\textit{UB}(s)-\textit{UB} ^\ast)^2
\end{multline}

For distributed controllers (DAMPC and the neural controller), we need to define a local cost function $J_i(s)$ for agent $i$.  It is the same as the global cost function, except that only agents in $i$'s neighborhood are considered.  This restriction applies to all aspects of the cost function.  For example, $CV_i(s)$ is the sum over agents $j$ in agent $i$'s neighborhood of the percentage of agent $j$’s visual field that is blocked by other agents in agent $i$'s neighborhood.

We consider an agent's neighborhood to consist of a fixed number $l$ of the nearest agents, including the agent itself.  Thus, agent $i$'s neighborhood consists of agent $i$ itself and the $l\,{-}\,1$ agents closest to it.  We take $l\,{=}\,7$ in our experiments. State $s$ is considered to be a V-formation if $J(s) \leq \varphi$ for a specified threshold $\varphi$.\footnote{The threshold $\varphi$ is a small positive constant chosen to allow for numerical errors due to floating-point computation, and also to allow for tiny perturbations that result in  formations which are visually indistinguishable from a V.}
\section{Impossibility Results}
\label{sec:impossibility}

Designing controllers that achieve V-formation when the controllers are distributed, symmetric, and deterministic is difficult.  This further motivates our proposed research on learning distributed and symmetric controllers for flight formation, including V-formation, from centralized controllers.

Distributed V-formation is an interesting and challenging problem. First note that a V-formation implicitly elects a leader. Hence a correct distributed algorithm will also solve the distributed leader election problem. It is known that there is no deterministic distributed leader election algorithm when all agents are identical \cite[chap.~3]{Attiya2004}.  This result, however, does not directly carry over to V-formation since the state of each agent consists of its (and its neighbors) spatial location, and two agents can never be identical (i.e., have the same spatial location for itself and its neighbors).  Nevertheless, most attempts to design deterministic distributed algorithms for V-formation will build in some form of spatial symmetry, and it is often possible to exploit this symmetry to devise initial configurations from which a proposed algorithm will fail to reach a V-formation.

First, V-formation inherits the issue in distributed systems that stems from the agents forming a disconnected partition.  For the next two results, we assume that: (A1) if the $k$ agents in the neighborhood of an agent (including itself) are in a perfect V-formation, then that agent would set its acceleration to $0$ (that is, it will maintain the formation).

\begin{proposition}
Under Assumption~(A1), if $N$ agents are spatially separated into two groups of $k$ and $N - k$ agents, where $k = \lfloor \frac{N}{2} \rfloor $, such that (1) each group is in a perfect V-formation, and (2) the $k-1$~nearest neighbors of any agent are in its own group, then a distributed procedure using neighborhood $k$ will fail to achieve a full V-formation on $N$ agents.
\end{proposition}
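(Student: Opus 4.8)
The plan is to exhibit the described configuration as a state from which the procedure is permanently stuck, driven by a single structural observation: the neighborhood size is exactly $k$, and the smaller group contains exactly $k$ agents. Hence for any agent in the $k$-group, its neighborhood---itself together with its $k-1$ nearest neighbors---coincides with the \emph{entire} $k$-group. Condition~(2) guarantees that these $k-1$ nearest neighbors all lie in the same group, so the neighborhood is precisely the $k$-group, and condition~(1) says that this group is a perfect V-formation. Assumption~(A1) then applies directly and forces every agent in the $k$-group to set its acceleration to $0$.

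First I would turn this into a persistence argument. Since a perfect V-formation has zero velocity-matching cost, all $k$ agents of the smaller group share a common velocity; with acceleration $0$ for each, the group translates rigidly by Eqs.~(1)--(2), preserving its relative geometry and therefore remaining a perfect V at every step. Next I would fix the initial data so that both groups share this common velocity and are separated by a large distance. In the case $N$ even, the other group also has exactly $k$ agents, so the same argument freezes it as well; both groups translate identically, the separation and condition~(2) are preserved for all time, and (A1) keeps every acceleration at $0$ forever. The system is then permanently two disjoint V-formations, and no time step ever produces a single global V on $N$ agents, which is the required failure.

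The main obstacle is making the ``remains separated, hence never a single V'' claim airtight when $N$ is odd, since then the larger group has $k+1$ agents, each of which sees only $k$ of the $k+1$ and may accelerate. I would address this in two ways. The $k$-group is still frozen and continues to maintain its own complete V-formation (with its own apex), which cannot sit as a sub-configuration inside one global V; and, to control the geometry explicitly, I would invoke the bounded-velocity and bounded-acceleration constraints of Eq.~\ref{eq:opt}: with speeds capped at $v_{max}$, an agent moves at most $dt\cdot v_{max}$ per step, so choosing the initial inter-group distance large enough keeps the two clusters spatially separated over any prescribed horizon regardless of the larger group's internal rearrangements. Since a single V-formation is one compact formation, the persistent separation of the frozen $k$-group from the remaining agents precludes a full V-formation on $N$ agents, establishing the proposition.
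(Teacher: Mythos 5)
Your proof takes essentially the same approach as the paper's: observe that each agent's $k$-neighborhood lies entirely within its own group, apply Assumption~(A1) to force zero acceleration, and conclude the two groups persist separately so no global V on $N$ agents is ever formed. The paper's version is a single sentence, so your additional care --- the rigid-translation persistence argument and the treatment of the odd-$N$ case, where the larger group has $k+1$ agents and (A1) does not literally apply to its members --- actually patches gaps the paper's proof glosses over rather than departing from it.
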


\begin{proof}
In the distributed case, the agents have no knowledge of the existence of the other group, and hence use an acceleration of $0$ to keep their current formation, according to Assumption~(A1).
\end{proof}

Agents partitioning into disconnected groups is not the only issue. Even when the neighborhood graph is connected, formations can look optimal locally, but remain unoptimal globally.

% I changed k to l, to be consistent with use of l for neighborhood size in Background.tex and use of k for k-fold retraining in Learning.tex  --scott

\begin{proposition}
Under Assumption~(A1), there exist initial configurations of $N$ agents such that starting from that configuration, a distributed procedure using neighborhood size $l=2$ will fail to achieve a full $V$-formation. 
\end{proposition}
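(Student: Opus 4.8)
The plan is to exhibit a single initial configuration that is a fixed point of the neighborhood-size-$2$ distributed procedure yet is not a V-formation. By Assumption~(A1) it then suffices to arrange that every agent's $2$-agent neighborhood is an exact perfect V-formation, so that all accelerations are $0$ and the state never changes. First I would build the configuration out of at least two disjoint ``mini-formations'' sharing a common velocity: for instance two perfect $2$-agent pairs (or two short echelons), in each of which consecutive agents sit at the unique relative offset $\mu_1$ that maximizes upwash, have matched velocities, and enjoy clear views. Each such pair is, by construction, a textbook perfect $2$-agent V-formation with zero local cost.

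Next I would fix the spacing so that every agent's unique nearest neighbor lies inside its own mini-formation (within-group distance strictly smaller than cross-group distance), and check that the resulting size-$2$ neighborhood of every agent is exactly a perfect V-formation. Where an interior agent has two equidistant candidate neighbors, I would observe that \emph{both} candidate pairs are perfect $2$-agent V-formations, so Assumption~(A1) applies regardless of how the tie is broken. With every neighborhood a perfect V, (A1) forces $a_i=0$ for all $i$, hence by the equations of motion the configuration is invariant and the procedure remains at this state for all time.

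It then remains to certify that this frozen state is \emph{not} a V-formation, i.e.\ that $J(s)>\varphi$. The clean way to see this is through the upwash term: a genuine V has a single apex agent, which is exactly why $\textit{UB}^\ast=1$ (every agent but the leader attains maximal upwash). A configuration assembled from $m\ge 2$ disjoint mini-formations instead has $m$ ``leaders'' receiving no upwash, so $\textit{UB}(s)=m$ and $(\textit{UB}(s)-\textit{UB}^\ast)^2=(m-1)^2>0$, forcing $J(s)>0$ and hence $J(s)>\varphi$. Therefore the configuration is not a V-formation and the distributed procedure fails to reach one.

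The step I expect to be the real obstacle is reconciling global non-optimality with exact local perfection. The optimal upwash offset and the velocity-matching requirement are rigid, so any \emph{connected} chain of perfect $2$-agent V-formations is forced to be a straight echelon; but under the $\textit{CV}/\textit{VM}/\textit{UB}$ cost an echelon has matched velocities, clear views, and maximal upwash for every follower, so $J=0$ and it already counts as a (degenerate) V-formation --- hence a single connected chain is \emph{not} a counterexample. The non-optimality must therefore be supplied by a global quantity that size-$2$ views cannot perceive, namely the leader count inside $\textit{UB}$, which is precisely what dictates using two or more disjoint sub-formations. Arranging those sub-formations so that the flock still occupies a single spatial cluster (so that the example is genuinely different from the spatially separated partition of the preceding proposition) while keeping each nearest-neighbor pair intact is the delicate geometric point I would need to nail down.
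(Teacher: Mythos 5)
Your construction does prove the proposition as literally stated: two well-separated $2$-agent echelon pairs with a common velocity form a fixed point under (A1), and the upwash term alone gives $J(s)\ge (m-1)^2>\varphi$. But notice that your configuration satisfies exactly the hypothesis of the preceding proposition with $k=2$ (two spatially separated groups, each a perfect V, each agent's nearest neighbor inside its own group), so your argument adds nothing beyond the disconnection obstruction already covered there. The point of this proposition --- announced in the sentence immediately before it --- is that failure occurs \emph{even when the agents form a single connected cluster}, and the paper's proof gives such an example: a $3$-agent V with leader $A$ at $(0,a)$ and wing agents $B$ at $(-b,0)$ and $C$ at $(b,0)$, plus a fourth agent $D$ at $(0,-a)$, all with velocity $(0,1)$. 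Every nearest-neighbor pair is a perfect $2$-agent V (each follower sits at the optimal upwash offset behind one wing of its partner), so (A1) freezes the configuration; the global defect is that $D$'s view is blocked, and clear view cannot be detected from a $2$-agent neighborhood.

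The step you flagged as the ``real obstacle'' is precisely where your reasoning goes wrong, and it is what steered you away from the connected construction. You claim that any connected chain of perfect $2$-agent Vs is forced to be a straight echelon with $J=0$. It is not: the upwash model is built from $|h_{ij}|$, so a follower may sit behind either wing of its leader, and consecutive links may alternate sides. The diamond $A,B,C,D$ above is a connected arrangement in which every local pair is a perfect V yet the whole is not globally optimal; the non-local quantity that $l=2$ cannot perceive there is clear view, not only the leader count in the upwash term. Your proof is therefore acceptable for the statement as written, but to serve the proposition's purpose you should replace the disjoint mini-formations with a connected example of this kind.
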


\begin{proof}
Consider a perfect V-formation on three agents with one leader $A$ (at coordinate $(0,a)$), one agent $B$ on the left branch (at coordinate $(-b,0)$), and one agent $C$ on the right (at coordinate $(b,0)$), where $a, b$ are positive. Now, add the $4$-th agent $D$ at the position $(0,-a)$. Note that the position $(0, -a)$ experiences optimal upwash (coming from $B$ and $C$). Assume all agents have velocity $(0,1)$, and hence all agents are velocity matched. Agent $D$, however, does not have optimal clear view. If $l=2$, then every agent sees one other agent that is nearest to it.  Every such pair of agents, however, are in a local V-formation, so all agents set their acceleration to zero, by Assumption~(A1). Note that agent $D$ would not realize it doesn't have clear view unless it looks at at least two other agents, i.e., unless $l$ is at least $3$.
\end{proof}

%I am not sure whether assumption A4c is reasonable.  I think that all of our V-formation controllers violate this assumption.  intuitively, scaling velocity, without scaling space or time, is not a behavior-preserving transformation.  for example, suppose the agents are not in a V formation and are moving very slowly.  then the controller will accelerate them to try to get in (or close to) a V (e.g., the MPC controller will try to achieve this within the lookahead horizon).  however, if we scale the velocities up significantly, and give the scaled velocities as inputs to the controller, the controller will realize that the agents are already moving quickly and will decelerate them.  --stoller

These two propositions highlight two potential issues faced by a distributed approach: first, agents could get disconnected, and second, clear view is not a local property. However, what if the agents are connected and $l > 2$?
We present a scenario that demonstrates a third difficulty faced by a distributed procedure, namely the existence of multiple different optimal V-formations. We need some assumptions. We assume that: (A2)~if the velocity of an agent is aligned with the average velocity of all neighboring agents, then the controller picks an acceleration that is also aligned with that direction.  In other words, this assumption implies that if an agent is moving in the direction that is given by the average of the velocities of its neighbors, then it does not change its direction -- it can still speed up or slow down, but it keeps the direction of its motion unchanged.  This is a reasonable assumption for a controller since the controller is trying to achieve velocity matching and picking the average velocity is a commonly used strategy for this purpose.

We further assume that: (A3)~the controller is invariant to rotation of the coordinate axes; that is, just changing the frame of reference does not change the action computed by the controller for any configuration.  Assumption~(A3) is also a reasonable assumption. If a controller uses only the relative positions of its neighbors (with respect to its own position) and relative velocities of its neighbors (with respect to its own velocity), then such a controller can be seen to satisfy Assumption~(A3). 
If a controller satisfies Assumptions~(A2) and~(A3), then we can show that if every agent uses such a controller in a truly distributed manner to compute its own acceleration, then there are configurations that will never converge to a V-formation.

\begin{proposition}\label{prop3}
    If every agent's local controller satisfies Assumptions~(A2)--(A3), then there exists an initial configuration such that the trajectory of the multi-agent system starting from that initial configuration will never converge to a V-formation, even as the neighborhood graph remains connected and the agents use a neighborhood size greater than $2$. 
\end{proposition}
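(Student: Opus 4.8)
The plan is to build an initial configuration carrying a mirror symmetry that the dynamics can never break, and to observe that no V-formation respects this symmetry: a V has a single leader, and a configuration symmetric across the $y$-axis in which no agent ever lies on the axis simply cannot present a lone apex. Concretely, I would take an even number of agents, give every agent the same upward velocity $v=(0,1)$, and place them in mirror-image pairs across the $y$-axis with none on the axis---two distinguished ``leader candidates'' at the front positions $(\pm\epsilon, H)$ and the remaining pairs arranged symmetrically behind them; the neighborhood size is $l\geq 3$ and the spacing is chosen so that the nearest-neighbor graph is connected. Because the horizontal coordinates will be frozen (established below) and no agent sits at $x=0$, a V pointing up or down would need its apex exactly on the axis and so cannot form, while a V in any other orientation is not symmetric across the $y$-axis and so contradicts the maintained symmetry; hence $J$ stays bounded away from $0$ and $J(s)\leq\varphi$ is never met.

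The two assumptions combine to show the symmetry persists forever. Assumption~(A2) supplies the first half: since all velocities are vertical, each agent's velocity is aligned with the average of its neighbors' (also vertical) velocities, so~(A2) forces every acceleration to be vertical as well; thus no agent ever acquires a horizontal velocity component, the horizontal coordinates stay frozen, and in particular no agent ever reaches the axis. Assumption~(A3) supplies the second half: the very property the paper invokes to justify~(A3)---that the controller reads only relative positions and relative velocities---makes it equivariant not merely under rotations but under reflection across the $y$-axis, so mirror-image agents are always assigned mirror-image (hence equal-height) accelerations. An easy induction then shows that if the state is $y$-axis symmetric at time $t$ it is again symmetric at $t{+}1$: (A2) keeps the horizontal pattern fixed while (A3) keeps every mirror pair's vertical accelerations equal, so the whole configuration remains $y$-axis symmetric for all time and no single leader can ever emerge.

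The step I expect to be the crux is justifying that the reflection symmetry is genuinely inherited by the discrete dynamics, and this has two delicate pieces. First, I must confirm that nearest-neighbor selection is itself mirror-equivariant: the ``$l-1$ closest'' rule has to assign to each mirror pair the mirror images of one another's neighbor sets, which can fail if a distance tie is split asymmetrically, so I would choose the initial spacing and $l$ so that no such tie occurs or every tie is broken symmetrically. Second, since the agents keep moving, I must check that frozen horizontal coordinates together with the symmetric vertical motion do not eventually reorder the neighbor graph in a way that breaks connectivity or the mirror pairing; arranging the initial heights with enough separation (or bounding the vertical accelerations) keeps the neighbor relation stable. The reflection-equivariance of the controller is the linchpin---if one reads~(A3) strictly as rotations only, I would instead realize the same tie-between-equivalent-Vs obstruction with a rotationally symmetric construction, but the mirror version is cleaner because it lets~(A2) freeze exactly the degrees of freedom (the horizontal coordinates) whose change a V-formation would require, since forming a V would demand that some agent move onto the symmetry axis to serve as the apex.
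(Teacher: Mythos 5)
Your construction is genuinely different from the paper's, and its first half is sound: starting all agents with identical vertical velocities and applying Assumption~(A2) inductively does freeze the horizontal coordinates (every agent's velocity and every neighborhood's average velocity stay vertical no matter how the neighbor sets evolve, so the alignment hypothesis of~(A2) holds at every step). The gap is in the second half. Your argument that no V can ever form rests on the mirror symmetry being preserved forever, and that in turn rests on the controller being equivariant under reflection across the $y$-axis. Assumption~(A3) grants only invariance under \emph{rotations} of the coordinate axes, and rotation invariance does not imply reflection invariance: a chiral controller (say, one that always biases its steering to the left of the average heading) can be built from relative positions and velocities alone, is rotation-invariant, yet treats mirror-image local views differently. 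So even the paper's informal justification of~(A3) (``uses only relative positions and velocities'') does not hand you reflection equivariance. Without it, mirror pairs may receive unequal accelerations, the configuration need not stay symmetric, and your claim that ``the apex would have to lie on the axis'' loses its basis. You flag this yourself and offer to fall back on ``a rotationally symmetric construction,'' but that fallback is precisely the content of the proof and is not carried out.

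For reference, the paper's proof places $8$ agents on a circle, equally spaced and moving radially outward, with neighborhood size $3$; every agent's local view is a rotation of every other's, so (A3) alone propagates one agent's computed action to all agents, (A2) forces that action to be radial, and the configuration remains a rotationally symmetric ring forever. If you want to keep your mirror construction, the cleanest repair avoids (A3) entirely: once (A2) has frozen the horizontal coordinates and kept all velocities vertical, choose the initial horizontal positions so that no pair of agents is ever near the lateral offset $(12+\pi)w/16$ demanded by the upwash term; then $\textit{UB}(s)$ stays bounded away from $\textit{UB}^\ast$ and $J(s) > \varphi$ for all time. That repair, however, requires engaging with the definition of the cost function, which your writeup does not do.
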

\begin{proof}
    Consider a total of $8$ agents placed on a circle equidistant from each other and moving radially outwards with equal speed. Let this be the initial configuration.  Let the neighborhood size be $3$. In this case, the neighborhood of each agent will include one neighbor on its left and one on its right.  Note the symmetry in this configuration.  The local configuration (involving $3$ agents) that is available to every agent is equivalent modulo rotation of the coordinate axes. Hence, by Assumption~(A3), if we know the action computed by any one agent, then we would know the action computed by all agents (by just rotating it appropriately).
      Therefore, let us focus on one agent.  Without loss of generality, assume that this agent, call it $A$, has position $(0,a)$ and velocity $(0,b)$, where $(0,0)$ is the center of the circle of radius $a$ on which all the $8$ agents lie.  Let $B$ and $C$ be the two neighbors of $A$. Therefore, $B$ has position 
      $(a/\sqrt{2},a/\sqrt{2})$ and $C$ has position
      $(-a/\sqrt{2},a/\sqrt{2})$. Furthermore,
      $B$ has velocity $(b/\sqrt{2},b/\sqrt{2})$ and $C$ has velocity
      $(-b/\sqrt{2},b/\sqrt{2})$.
      If we compute the average of the velocities of $A$, $B$, and $C$, we get the velocity
      $(0, (1 + \sqrt{2})b/3)$. 
      The direction of this average velocity is aligned with the velocity of $A$, and hence by Assumption~(A2), the acceleration for $A$  computed by the controller will be of the form $(0,c)$, for some $c$, which is acceleration in the radial direction. 
      By Assumption~(A3), the controller for every agent will pick an acceleration that is aligned with its current velocity.  Consequently, after one time step, the $8$ agents will continue to lie on a circle $A$ with center $(0,0)$, and with velocities that are pointing radially outwards or inwards.  We can now apply our argument again, and we can do so repeatedly to conclude that the $8$ agents will continue to lie on a circle forever. This shows that they will fail to converge to a V-formation.
\end{proof}

The issue highlighted in the above proof is that there are several optimal configurations, and different agents can decide to pick a different end configuration. In the above proof, agent $A$ concludes that it is moving in the ``correct'' direction and that its two neighbors $B$ and $C$ should change their direction to match its own direction.  And every agent, including $B$ and $C$, come to the same conclusion. This is because there are $8$ different V-formations -- one heading in each of the $8$ different directions.  And each agent picks a different final V-formation to target.  One might wonder if fixing the heading direction (or the target destination) would solve this issue: any such change in the problem definition surely invalidates Proposition~\ref{prop3}. However, we note that direction is not the only thing that can vary.  The speed with which each agent is moving in the final V-formation can also change. The different agents can not only pick different final speeds in their final $V$, but also different directions and even different leaders. This suggests that some coordination/consensus is required so that all agents agree and work toward the same final $V$-formation.
%
%The issue highlighted in the above proof is that there are several optimal configurations and different agents can decide to pick a different end configuration. In the above proof, we used the fact that agents can either speed up or slow down and reach $V$-formations that differ in the speed of the agents.  The different agents can not only pick different final speeds in their final $V$, but also different directions and different leaders. This suggests that some coordination/consensus is required so that all agents agree and work toward the same final $V$ formation.
%%It also follows from the above proposition that a distributed procedure must need to invalidate some assumption. Assumption~(A4) about spatial symmetry is a strong assumption and one can imagine designing a distributed procedure that does not obey lateral symmetry; that is, agents at the edge (i.e., at (-1,0) and (1,0) in the above proof) can choose different actions by realizing they are at the left and right edges.
%%
But building in any form of coordination and consensus is tedious and error-prone.

%Now suppose five agents are initially forming a binary tree at (infix) positions $(-2,0)$, $(-1,1)$, $(0,0)$, $(0,2)$, and $(1,1)$. The units are chosen such that they provide optimal upwash benefit from the north-side predecessor bird.  Suppose the agents are all moving with the same velocity northward, say $(0,1)$ units. Note that this configuration is not a V-formation because the agent in position $(0,0)$ does not have a perfect clear view. Now consider a distributed controller with neighborhood size $3$ (including itself). Locally, each agent sees itself and its two nearest neighbors as achieving a perfect $V$ (i.e., lowest possible value for the cost function $J$). Hence they remain in that formation, which is not globally optimal.

\section{Neural V-formation}
\label{sec:neural}

We learn a Distributed Neural Controller (DNC) for V-formation from trajectories obtained from a Centralized Adaptive-horizon Model Predictive Controller (CAMPC). Our learniong procedure makes use of a technique we call Counterexample-Guided $k$-fold Retraining (CEGkR), which uses counterexamples generated during testing of the neural controller as sources of new initial configurations for the CAMPC to generate additional training data. 
%The CEGkR methodology is explained further in Section~\ref{sec:cegkr}.

\subsection{Training a Distributed V-Formation Controller}
\label{sec:vform}
 
We use Deep Learning to synthesize a distributed and symmetric V-formation controller from the CAMPC controller (see Section~\ref{sec:background}), which generates the requisite training data in the form of trajectories leading to V-formation.  A  trajectory is a sequence of state-action pairs, where a state contains the information known to an agent at a time step (e.g., the positions and velocities of all agents in its neighborhood, including itself),
% there is no need to include the value of the local cost function in the state, since it can be computed from other info in the state.  --stoller
and the action (the label) is the acceleration assigned to that agent at that time-step by CAMPC.  We employ Supervised Learning to train our neural controller with the trajectories obtained from CAMPC.

The input features to the neural network are the 2-dimensional positions and velocities of all 7 agents in the agent's neighborhood and the value of the agent's local cost function.  Thus, the NN has 29 input features, and the input has the form $[p^x_0 \,\, p^y_0 \,\, v^x_0 \,\, v^y_0 \,\, p^x_1 \,\, p^y_1 \,\, v^x_1 \,\, v^y_1\cdot \cdot \cdot \cdot \, p^x_6 \,\, p^y_6 \,\, v^x_6 \, v^y_6 \, J]$, where $p^x_0$, $p^y_0$ and $v^x_0$, $v^y_0$ are the position and velocity coordinates, respectively, of the learning agent (i.e., the agent whose controller is being learned), $p^x_i$, $p^y_i$ and $v^x_i$, $v^y_i$ are the positions and velocities of the neighboring agents where $i = 1, \ldots, 6$, and $J$ is the local cost function of the learning agent.

We use CAMPC to generate trajectories, each with a duration of 50 time-steps. Let $\langle s_0,\vec{a}_0\rangle, \ldots, \langle s_{49},\vec{a}_{49}\rangle$ be a trajectory generated using CAMPC, where each $s_i$ is a $28$-dimensional state and $\vec{a_i}$ is the $14$-dimensional action computed for that state by CAMPC. For training the DNC controller, we obtain $50*7=350$ data points from each such trajectory, namely  $\langle T_j(s_i), J_j(s_i), \vec{a}_i[j]\rangle$, where $T_j(s_i)$ denotes agent $j$'s \emph{view} of state $s_i$, and $\vec{a}_i[j]$ is $j$'s 2-dimensional acceleration.

The view $T_j(s_i)$ is obtained by (a)~replacing absolute positions with positions relative to the position of agent $j$ (i.e., $p_k^x$ is replaced by $p_k^x - p_j^x$ and $p_k^y$ is replaced by $p_k^y - p_j^y$, for every agent~$k$); and 
(b)~permuting the indices of the agents so that the entries are in order of increasing distance from $j$.  Hence, agent $j$ is at index $0$, the nearest neighbor of $j$ in state $s_i$ is at index $1$, etc. Note that the relative position of agent $j$ is always $(0,0)$, so the first two entries of $T_j(s_i)$ are zero. Also, note that during all of the training (but not during the testing) performed in our experiments, the neighborhood size equals the total number of agents.  Thus, the local cost functions $J_j$ are equivalent to the global cost function $J$.

We learn a {\em single} neural V-formation controller from the state-action pairs of {\em all} agents. This yields a symmetric distributed controller, which we use for each agent during evaluation.  Note that the neural controller produces accelerations for only one agent, so it needs to be to run separately for each agent.

Our neural controller is a fully connected feed-forward deep neural network (DNN), with 5 hidden layers, 84 neurons per hidden layer, and with a sigmoid activation function.  To perform optimizations involving the MPC cost function, the Adam optimizer~\cite{adamopt} was used with the following settings: $lr\,{=}\,10^{-4}$, $\beta_1\,{=}\,0.9$, $\beta_2\,{=}\,0.999$, $\epsilon\,{=}\,10^{-8}$.  The number of trainable DNN parameters is 31,335, the batch size (number of samples processed before the model is updated) is 500, and the number of epochs (number of complete passes through the training dataset) used for training is 1000,.  The mean-squared error metric is used to measure training loss.  To train the neural networks, we use Keras~\cite{chollet2015keras}, which is a high-level neural network API written in Python and capable of running on top of TensorFlow. We used an iterative approach (based on the success rate of the neural controllers) for choosing the appropriate DNN hyperparameters and architecture.

\subsection{Counterexample-Guided $k$-fold Retraining}
\label{sec:cegkr}

We introduce a new counterexample-based retraining technique we call \cegkr to further improve the performance of the distributed neural controller (DNC) we obtain using the learning approach described in Section~\ref{sec:vform}. 
% I think it makes more sense for the success threshold on the global cost function to be "per-agent", i.e., to be proportional to the number of agents.  for example, consider a small and a large flock in which each agent's view is (say) 1 percent blocked.  if the success threshold is independent of the number of agents, will declare success for the small flock and failure for the large flock.  intuitively, I think it makes sense to have the same outcome in both cases.  --stoller
In the context of our V-formation investigation, \cegkr works as follows.  A retraining procedure first tests the neural controller by running it for $t=50$ time-steps, starting from $10^4$ randomly generated initial states.  We use a V-formation convergence threshold of $\varphi = 10^{-3}$ as the success criterion; i.e., the DNC successfully achieved V-formation if $J(s) \leq 10^{-3}$ at the end of the trajectory.  We refer to the failed trajectories as \emph{counterexamples}. Let $\tau = \langle s_0, \vec{a}_0\rangle, \ldots, \langle s_{49}, \vec{a}_{49}\rangle$ be a counterexample. Note that the neural controller fails to reach a V-formation (by the end of $\tau$) not only from the initial state $s_0$, but also from the subsequent states of $\tau$.

We do not know, however, whether states near the end of a failed trajectory are problematic for the neural controller, because there are not enough remaining time-steps in the trajectory to properly evaluate the controller's performance starting from those states.  Therefore, we pick a cutoff $k$ and use the first $k$ states in each counterexample as initial states to generate $k$ new training trajectories. We do this by running the CAMPC controller for 50 time-steps starting from each of these states.  Note that each counterexample leads to a total of $k\cdot 50\cdot 7$ new training data points for improving the neural controller.

After updating the controller using this new training data and using the same learning algorithm as in initial training, we test the updated controller by running it from a new batch of $10^4$ randomly generated initial states.  If the success rate (i.e., number of trajectories ending in a V-formation) for this batch is higher than in the previous round of testing, we perform another round of retraining.  Otherwise, the \cegkr retraining procedure is terminated.

As noted, \cegkr uses the first $k$ states from each counterexample trajectory to generate new training data.  Regarding the choice of $k$, we first observe that $k$ should not be too large, partly for the reason mentioned above, and partly because states near the end of the trajectory may be uninteresting, in the sense that they are encountered during testing only as the result of an accumulation of poor decisions made by the current controller earlier in the execution.  Our final improved controller will never encounter those states, so there is no benefit of using them for training.  For example, states where the flock has split into disconnected subflocks are uninteresting in this sense.

Secondly, $k$ should not be too small.  It is possible that the neural controller makes good decisions in the first several states, but later on, say in $s_{10}$, it chooses a poor action that leads to failure.  Using data for these later states during retraining provides the most benefit.  Note that these later states might not occur in a trajectory computed by CAMPC starting from an earlier state such as $s_0$ or $s_1$. In our experiments, we found that increasing $k$ (from 10 to 30 to 35) led to a concomitant increase in the success rate.  See Section~\ref{sec:cegkreval}.
\section{Experimental Results}
\label{sec:exp}

This section contains the results of our performance analysis of the distributed neural V-formation controller (DNC).  It specifically reports on the performance improvement due to \cegkr, compares the performance of DNC with DAMPC, and uses Statistical Model Checking to obtain confidence intervals for DNC's correctness/performance.

\subsection{\cegkr Performance Evaluation}
\label{sec:cegkreval}

Table~\ref{table:CEGER} demonstrates the DNC's performance improvement due to \cegkr. For the initial configurations used to generate the initial training samples, agent positions and velocities are uniformly sampled from $[0,5]^2$ and $[0.25, 0.75]^2$, respectively. The total number of initial training samples for each experiment is $init \cdot n$, where $init$ is the number of unique initial configurations, and $n$ is the number of agents.
An ``experiment'' is an instance of using the \cegkr methodology to generate a DNC.
A single training sample (trajectory) is comprised of $m=50$ discrete time-steps. For all experiments, $n=7$, and for Experiment~1, $init=23,000$, and for Experiments~2, 3 and~4, $init=50,000$. As we show below, increasing $init$, which thereby increase the total number of training samples, increases the success rate; i.e., the rate of reaching V-formation.

% We were inspired to pick $n=7$ based on the results in~\cite{Bouffanais2014}. 
%Moreover, in our experiments, neighboring agents were never arbitrarily far away from the agent in question. Thus, using a nearest-neighbor topology instead of a distance-based one will give us similar results.  

For Experiment~1, Run~1, we initially train our DNC with 161,000 training samples, and perform no retraining. This version of DNC achieves a success rate of 85.07\% on $10^4$ test cases, which are generated from the same distribution used for initial training. For Run~2, we take the first 10 states of all Run~1 failed test cases and use them as initial states for CAMPC to use to generate new trajectory data, i.e.,  the guided training samples. The total number of guided retraining samples is $\textit{f} \; n \; k$, where $f$ is the number of failed test cases. For example, in Run~1, $f = 1493$ and $k = 10$, so the number of guided retraining samples is $1,493 \cdot 7 \cdot 10 = 104,510$. Retraining with these samples leads to a 4\% increase in the success rate. As described in Section~\ref{sec:cegkr}, we repeat this procedure until there is no improvement in the DNC success rate.

\begin{table}[t!]
\caption{CEGkR results for V-formation based on $10^4$ test cases}
  \label{table:CEGER}
  \centering
  \resizebox{\linewidth}{!}{%
  \begin{tabular}{ccccc}
    \toprule
     Retraining  & \# Guided     & Success     & Median\\
     Run Id       & Retraining Samples & Rate ($\%$) & Final $J$\\
    \hline
    & Experiment 1 :& 161,000 Initial Training Samples & \\
    \hline
    Run 1   &      0     & 85.07  & 0.0002747 \\
    Run 2   & 104,510  & 89.03  & 0.0000499 \\
    Run 3   & 76,790  & 91.12  & 0.0000299 \\
    Run 4   & 62,160  & 91.12  & 0.0000299 \\
    \hline
    & Experiment 2 :& 350,000 Initial Training Samples & \\
    \hline
    Run 1   &      0     & 90.08  & 0.0000351 \\
    Run 2   & 69,440  & 91.10  & 0.0000300 \\
    Run 3   & 62,300  & 92.22  & 0.0000231 \\
    Run 4   & 54,600  & 92.91  & 0.0000222 \\
    Run 5   & 49,630  & 93.04  & 0.0000222 \\
    Run 6   & 48,720  & 93.04  & 0.0000222 \\
    \hline
    & Experiment 3 :& 350,000 Initial Training Samples & \\
    \hline
    Run 1   &      0     & 90.08  & 0.0000351 \\
    Run 2   & 208,320 & 92.16  & 0.0000249 \\
    Run 3   & 164,640  & 93.33  & 0.0000221 \\
    Run 4   & 140,070  & 94.01  & 0.0000200 \\
    Run 5   & 125,790  & 94.95  & 0.0000155 \\
    Run 6   & 106,050  & 94.95  & 0.0000155 \\
    \hline
    & Experiment 4 :& 350,000 Initial Training Samples & \\
    \hline
    Run 1   &      0     & 90.08  & 0.0000351 \\
    Run 2   & 243,040 & 93.02  & 0.0000235 \\
    Run 3   & 171,010  & 93.88  & 0.0000213 \\
    Run 4   & 149,940  & 94.51  & 0.0000202 \\
    Run 5   & 134,505  & 95.16  & 0.0000153 \\
    Run 6   & 118,335  & 95.16  & 0.0000151 \\
    \hline
  \end{tabular}}
  \vspace*{-2ex}
\end{table}

Experiment~2 is similar to Experiment~1.  The only difference is that Experiment~2 has approximately twice the number of initial training samples as compared to Experiment~1, which gives it an improved initial success rate. Experiments~3 and~4 use the same set of initial training samples as Experiment~2; the difference is that they use $k=30$ and $k=35$, respectively, instead of $k=10$.

Table~\ref{table:CEGER} demonstrates the benefit of CEGkR, which include the following.
(1)~CEGkR always improves the performance of the learned controller.  Specifically, Run~$2$ in every experiment shows significant improvement over Run~$1$. (2)~As expected, CEGkR does not improve the success rate forever; rather the success rate eventually plateaus.
(3)~Using higher values for $k$ in the CEGkR retraining loop improves the quality of the learned controller: the success rate in Experiment~4 is better than that in Experiment~3, which is better than that in Experiment~2.  Note that increasing $k$ also increases the size of the training data and therefore the cost of retraining.

Table~\ref{table:cegkrcomparison} presents a comparative evaluation of the performance of neural controllers obtained with and without CEGkR. The same number of training samples are used to train both controllers. We use the DNC obtained from Experiment~4 in Table~\ref{table:CEGER} as our neural controller with CEGkR.  For the non-CEGkR controller, we trained it using 1,116,830 training samples, which is equal to the total number of training samples (initial training samples + guided retraining samples) used for training the DNC with CEGkR.  The results show that using \cegkr offers a clear advantage, as the \cegkr controller has a consistently higher success rate as the number of agents generalizes beyond~7.

\begin{table}[t]
\caption{Performance comparison for DNC with and without CEGkR on $10^4$ runs}
  \label{table:cegkrcomparison}
  \centering
  \resizebox{0.65\linewidth}{!}{%
  \begin{tabular}{ccccccc}
    \toprule
    Number of& \cegkr & Non-\cegkr \\
    Agents & Success Rate($\%$) & Success Rate($\%$) \\
    \hline
    7  & 95.16 & 90.51\\%19.69 &  &  \\
    8  & 94.57 & 89.03\\%20.05 &  &  \\
    9  & 93.78 & 87.66\\%20.58 &  &  \\
    10 & 93.05 & 85.25\\%22.16 &  &  \\
    11 & 93.05 & 84.10\\%22.16 &  &  \\
    12 & 92.67 & 81.98\\%23.89 &  &  \\
    13 & 91.38 & 80.24\\%25.23 &  &  \\
    14 & 87.35 & 78.47\\%29.24 &  &  \\
    15 & 84.25 & 74.72\\%34.31 &  &  \\
    16 & 73.40 & 65.39\\%39.05 &  &  \\
    \hline
  \end{tabular}}
%   \vspace*{-2ex}
\end{table}

\subsection{Comparing the Performance of DNC vs DAMPC}

The experiments in this section compare the performance of DNC with the distributed adaptive-horizon MPC controller (DAMPC). We focus on DAMPC and DNC because unlike CAMPC, they both rely on sensing only and not communication.
%the same assumptions about the deployed system. 
The DNC we use from here onwards is the one obtained from Experiment~4 in Table~\ref{table:CEGER}. For determining DNC's success rate, we modify the convergence threshold and number of time steps that the controller runs to be proportional to the number of agents $n$. Specifically, we use a convergence threshold of $ \left( n/7 \right) \varphi$ and a number of time-steps of $\left( n/7 \right) m$, where $\varphi=10^{-3}$ and $m=50$. We observed experimentally (visually) that this proportional increase in the threshold is justified.  The rationale for increasing the number of time-steps is that with an increasing $n$, the DNC will take longer to converge.

The DAMPC controller is presented in Section~\ref{sec:background}; recall that it is a variant of the one presented in~\cite{Lukina2019}.  The adaptive-horizon feature is used with the prediction horizon $h$ restricted to the interval $[1, 3]$.

\begin{table}[t!]
\caption{Performance Comparison: DNC vs DAMPC on $10^4$ runs}
  \label{table:gen}
  \centering
  \resizebox{0.75\linewidth}{!}{%
  \begin{tabular}{ccccc}
    \toprule
    & \multicolumn{2}{c}{DAMPC} & \multicolumn{2}{c}{DNC} \\
    \hline
    Number of & Success & Avg. Conv. & Success & Avg. Conv. \\
    Agents & Rate($\%$) & Time & Rate($\%$)& Time \\
    % Agents & ($\%$) & Time & ($\%$) & Time \\
    \hline
    7  & 89.84 & 20.11 & 95.16 & 19.69 \\
    8  & 85.16 & 21.73 & 94.57 & 20.05 \\
    9  & 79.04 & 24.27 & 93.78 & 20.58 \\
    10 & 75.37 & 24.52 & 93.05 & 22.16 \\
    11 & 70.91 & 26.03 & 92.67 & 23.89 \\
    12 & 66.82 & 27.86 & 91.38 & 25.23 \\
    13 & 61.58 & 32.23 & 89.97 & 27.77 \\
    14 & 52.49 & 34.87 & 87.35 & 29.24 \\
    15 & 41.75 & 39.71 & 84.25 & 34.31 \\
    16 & 34.03 & 39.84 & 73.40 & 39.05 \\
    \hline
  \end{tabular}}
  \vspace*{-2ex}
\end{table}

Table~\ref{table:gen} demonstrates the generalization capabilities of DNC (from~7 to~16 agents), and compares its performance with that of DAMPC. While increasing the number of agents from $7$ to $16$, the neighborhood size is fixed at~7. The main observations from Table~\ref{table:gen} are the following. (1)~DNC consistently outperforms DAMPC, thus demonstrating that our approach for learning distributed controllers from training data generated by a centralized controller produces a very effective distributed controller, one that outperforms a distributed controller designed following the well-established MPC-based approach.
(2)~DNC's average convergence time is considerably smaller than that for DAMPC.  Note that the \emph{convergence time} is the time when the global cost function first drops below the success threshold $\varphi$.  Since the calculation of average convergence time only uses successful runs (ignoring the failed runs), it follows that not only does DNC achieve success more often, it does so in fewer steps.  This means it is better than DAMPC at avoiding wrong decisions that lead to local minima.

An important advantage of the neural controller over CAMPC and DAMPC is that the former is much faster at generating the action at every time-step for each agent.  Executing a DNC requires a modest number of arithmetic operations, whereas executing an MPC controller requires simulation of a model and a controller over the prediction horizon.

In our experiments, on average, CAMPC and DAMPC take 1,730~msec and 524~msec of CPU time, respectively, whereas the DNC only takes 1.5~msec.  These results are averages over $10^4$ runs with 7 agents.  Although multiple instances of DNC are needed (one per agent), they all run in parallel, so it is reasonable to compare the CPU time of CAMPC with that for one instance of DNC.  Even if we consider the total CPU time for all instances of DNC, it is much less than CAMPC.

\begin{table}[t]
\caption{Robustness Performance for DNC on $10^4$ runs}
  \label{table:robustness}
  \centering
  \resizebox{\linewidth}{!}{%
  \begin{tabular}{lccc}
    \toprule
     Config.\ Space &  \# Agents  & Success  Rate ($\%$) & Avg. Convergence Time \\
    \hline
    Pos: $[0,6]^2$      & 7  & 94.28 & 19.88 \\
    Vel: $[0.4, 0.8]^2$ & 15 & 82.19 & 39.12 \\
    \hline
    Pos: $[0,8]^2$        & 7  & 91.84 & 20.54 \\
    Vel: $[0.35, 0.95]^2$ & 15 & 78.33 & 40.01 \\
    \hline
    Pos: $[0,10]^2$     & 7  & 87.63 & 20.93 \\
    Vel: $[0.1, 0.9]^2$ & 15 & 75.46 & 39.43 \\
    \hline
  \end{tabular}}
  \vspace*{-2ex}
\end{table}

\subsection{Evaluating Robustness of Distributed Neural Controller}
\label{sec:robustness}

We also demonstrate that our DNC is robust to variations in the initial conditions: it performs well even from initial states well beyond the range of initial states on which it was trained.  Recall that during training, the positions and velocities are uniformly sampled from $[0,5]^2$ and $[0.25, 0.75]^2$, respectively.  We test the controller on initial states selected from three other configuration spaces (i.e., ranges of initial states), which are defined in Table~\ref{table:robustness}.  The initial positions and velocities are uniformly sampled from these ranges.  The table also shows the number of agents, the percentage of successful executions, and the average convergence time.  The results in each row are averages over $10^4$ runs.

\begin{figure}[t]
%[t!]
\centering
% \subfloat[$t = 1$]{\includegraphics[width=3cm]{}}\hfill
% \subfloat[$t = 10$]{\includegraphics[width=3cm]{}}\hfill
% \subfloat[$t = 20$]{\includegraphics[width=3cm]{}}\hfill
\subfloat{\includegraphics[width=7cm]{}}
\caption{Snapshots of V-formation with 7 agents using DNC}\label{fig:snapshot}
\vspace*{-2ex}
\end{figure}

When we move from the configuration space used during initial training to the third configuration space, the size of the set of possible initial positions expands by a factor of $(10/5)^2$, and the size of the set of possible initial velocities expands by a factor of $(0.8/0.5)^2$; hence there is an overall expansion factor of $\sim\!10$ in the initial state space.  This means that the probability that an initial state picked randomly from the third configuration space also lies inside the initial training configuration space is approximately $0.1$.  Thus, among the $10^4$ runs, there are only around $10^3$ runs on which we definitely expect a high rate of success.  

The actual success rate is much better than this argument suggests.  The success rate decreases from $95\%$ (in Table~\ref{table:gen}) to $87\%$ for 7~agents, and from $84\%$ to $75\%$ for 15~agents.  This is roughly a $10$\% decrease, much less than the $90$\% drop that would occur if the NN controller did not generalize its training in order to perform well from initial states beyond those used during training.

Figure~\ref{fig:snapshot} shows the progression of seven agents starting from initial positions randomly selected from the range $[0,5]$ till until they successfully converge to a V-formation. At $t=25$, we can observe that the agents have reached a V-formation, and thus the convergence time is 25.
%A consequence of this increased variance  is that, for $15$ agents, there are many runs on the boundary of success and failure, and possibly get classified as failure just because we used a strict $10^{-3}$ threshold to define ``success''. The data suggests that we ought to update this threshold as the number of agent increases.

\subsection{Statistical Model Checking Results}

We use Monte Carlo (MC) approximation as a form of Statistical Model Checking~\cite{Larsen14,Grosu14} to compute confidence intervals for the DNC's success rate for convergence to V-formation and for the (normalized) convergence time. The main idea of MC is to use $N$ random variables, $Z_1, \ldots, Z_N$, also called samples, IID distributed according to a random variable $Z$ with mean $\mu_Z$, and to take the sum $\tilde \mu_Z = (Z_1 + \ldots + Z_N)/N$ as the value approximating the mean $\mu_Z$.  Since an exact computation of $\mu_Z$ is almost always intractable, an MC approach is used to compute an ($\epsilon, \delta$)-approximation of this quantity.

\emph{Additive Approximation}~\cite{Thomas2004} is an ($\epsilon, \delta$)-approximation scheme where the mean $\mu_Z$ of an RV $Z$ is approximated with absolute error $\epsilon$ and probability $1-\delta$:
\begin{equation}
\label{eq:epsilon-delta}
    Pr[\mu_Z - \epsilon \leq \tilde{\mu}_Z \leq \mu_Z + \epsilon] \geq 1 - \delta
\end{equation}
where $\tilde{\mu}_Z$ is an approximation of $\mu_Z$. An important issue is to determine the number of samples $N$ needed to ensure that $\tilde{\mu}_Z$ is an ($\epsilon, \delta$)-approximation of $\mu_Z$.  If $Z$ is a Bernoulli variable expected to be large, one can use the Chernoff-Hoeffding instantiation of the 
Bernstein inequality and take $N$ to be ${N = 4 \ln(2/\delta)/\epsilon^2}$, as in~\cite{Thomas2004}. 
%  ${\Upsilon \propto \ln(1/\delta)/\epsilon^2}$.  
This results in the following $\emph{additive approximation algorithm}$ \cite{Grosu14}:

\begin{algorithm}[h]
\KwIn{($\epsilon$, $\delta$) with $0 < \epsilon < 1$ and $0 < \delta < 1$}
\KwIn{Random variables $Z_i$, IID}
\KwOut{$\tilde{\mu}_Z$ approximation of $\mu_Z$}

$N = 4 \ln(2/\delta)/\epsilon^2$;\\
\For{(i=0; i $\leq N$; i++)} {
$S = S + Z_i$;\\
}
$\tilde{\mu_Z} = S/N$;
\Return{$\tilde{\mu}_Z$}\;
    \caption{{\bf Additive Approximation Algorithm} \label{algo:AAA}}
\end{algorithm}
We use this algorithm to obtain a joint ($\epsilon, \delta$)-approximation of the mean success rate and mean normalized convergence time for the DNC trained using CEGkR.  Each sample $Z_i$ is based on the result of an execution obtained by simulating the system starting from a random initial state, and we take $Z=(B,R)$, where $B$ is a Boolean variable indicating whether the agents converged to a V-formation during the execution, and $R$ is a real value denoting the normalized convergence time in the execution. The normalized convergence time is the time when the global cost function first drops below the success threshold and remains below it for the rest of the execution, measured as a fraction of the total duration of the simulation.  The assumptions about $Z$ required for validity of the additive approximation hold, because RV $B$ is a Bernoulli variable, the success rate is expected to be large (i.e., closer to 1 than to 0), and the proportionality constraint of the Bernstein inequality is also satisfied for RV $R$.

In these experiments, the initial states are sampled from the same uniform random distributions as in Section \ref{sec:cegkreval}, and we set $\epsilon = 0.01$ and $\delta = 0.0001$, to obtain $N =$ 396,140.  We perform the required set of $N$ simulations for different numbers of agents, ranging from 7 to 16.

Table~\ref{table:smc} presents the results, specifically, the ($\epsilon, \delta$)-approximations $\tilde \mu_{SR}$ and $\tilde \mu_{CT}$ of the mean success rate and mean normalized convergence time, respectively.  While the results for the success rate are (as expected) numerically similar to the results in Table \ref{table:cegkrcomparison}, the results in Table~\ref{table:smc} are much stronger, because they come with the guarantee that they are ($\epsilon, \delta$)-approximations of the actual mean values.

% \begin{table}[t]
% \caption{DNC performances on $N$ runs for SMC}
%   \label{table:smc}
%   \centering
%   \resizebox{0.70\linewidth}{!}{%
%   \begin{tabular}{ccccccc}
%     \hline
%     Number of Agents & Success Rate($\%$) & Average Convergence Time\\
%     \hline
%     7  & 95.11 & 19.71 \\
%     8  & 94.53 & 20.12 \\
%     9  & 93.82 & 20.64 \\
%     10 & 93.05 & 22.13 \\
%     11 & 92.62 & 23.85 \\
%     12 & 91.41 & 25.29 \\
%     13 & 89.94 & 27.80 \\
%     14 & 87.27 & 29.26 \\
%     15 & 84.19 & 34.37 \\
%     16 & 73.38 & 39.11 \\
%     \hline
%   \end{tabular}}
%   \vspace*{-2ex}
% \end{table}

% \begin{table}[t]
% \caption{DNC performances on $N$ runs for SMC}
%   \label{table:smc}
%   \centering
%   \resizebox{0.55\linewidth}{!}{%
%   \begin{tabular}{ccccccc}
%     \hline
%     Number of & Convergence & Frac. Convergence\\
%     Agents & Rate($\%$) & Time($\%$)\\
%     \hline
%     7  & 94.11 - 96.11 & 38.42 - 40.42 \\
%     8  & 93.53 - 95.53 & 39.24 - 41.24 \\
%     9  & 92.82 - 94.82 & 40.28 - 42.28 \\
%     10 & 92.05 - 94.05 & 43.26 - 45.26 \\
%     11 & 91.62 - 93.62 & 46.70 - 48.70 \\
%     12 & 90.41 - 92.41 & 49.58 - 51.58 \\
%     13 & 88.94 - 90.94 & 54.60 - 56.60 \\
%     14 & 86.27 - 88.27 & 57.52 - 59.52 \\
%     15 & 83.19 - 85.19 & 67.74 - 69.74 \\
%     16 & 72.38 - 74.38 & 77.22 - 79.22 \\
%     \hline
%   \end{tabular}}
%   \vspace*{-2ex}
% \end{table}
\addtolength{\tabcolsep}{6pt} 
\begin{table}[t]
\caption{SMC results for DNC success rate/convergence time; $\epsilon = 0.01$, $\delta = 0.0001$}
  \label{table:smc}
  \centering
  \resizebox{0.6\linewidth}{!}{%
  \begin{tabular}{ccc}
    \hline
     \# Agents & $\tilde \mu_{SR}$ & $\tilde \mu_{CT}$\\
    \hline
    7  & 0.9511 & 0.3942 \\
    8  & 0.9453 & 0.4024 \\
    9  & 0.9382 & 0.4128 \\
    10 & 0.9305 & 0.4426 \\
    11 & 0.9262 & 0.4770 \\
    12 & 0.9141 & 0.5058 \\
    13 & 0.8994 & 0.5560 \\
    14 & 0.8727 & 0.5852 \\
    15 & 0.8419 & 0.6874 \\
    16 & 0.7338 & 0.7822 \\
    \hline
  \end{tabular}}
  \vspace*{-2ex}
\end{table}

\section{Related Work}
\label{sec:related}

Distributed control/coordination has been used extensively in multi-agent systems. Distributed controllers are typically designed by hand for specific objectives, and they are often very clever about what information is exchanged between agents and how that information is used to update local state~\cite{fb-jc-sm:08c,jwd-af-fb:10v}. Informally, coordination is required when the cost function is non-separable. A cost function is \emph{separable} if it does not contain terms that couple the states of two different neighbors~\cite{dunbar2006,dunbar2004,Balas2006}. Here we take the novel view of \emph{learning} distributed controllers from data generated using a centralized controller, while avoiding coordination. We apply it to a problem whose cost is clearly not separable and involves tight coupling of state vectors of all pairs of agents.

Previous work on V-formation, including approaches based on centralized and distributed model-predictive control, have been considered in~\cite{YANG2016,ARES,Lukina2019}.
%In~\cite{YANG2016} a cost function was proposed that reflects all major features of V-formation, namely, Clear View (CV), Velocity Matching (VM), and Upwash Benefit (UB). Model-predictive control (MPC) is used to achieve V-formation starting from an arbitrary initial configuration of $n$ birds. MPC solves the task by minimizing a functional defined as squared distance from the optimal values of CV, VM, and UB, subject to constraints on input and output. The approach is to choose an optimal velocity adjustment, as a control input, at each time-step applied to the velocity of each bird by predicting model behavior several time-steps ahead.
%
Other related work, including~\cite{D'Andrea2003,Fowler2002,Ye2017}, focuses on distributed controllers for flight formation (of moving-wing aircraft) that operate in an environment where the multi-agent system is already in the desired formation and the distributed controller's objective is to maintain formation in the presence of disturbances. A distinguishing feature of these approaches is the particular formation they are seeking to maintain, including half-V~\cite{Fowler2002}, ring and torus~\cite{D'Andrea2003}, and a leader-follower formation~\cite{Ye2017}. In~\cite{Moreno2017}, MPC-inspired approaches to system self-adaptation are considered, including the Proactive Latency-aware Approach (PLA)~\cite{Moreno2015}. The PLA problem is designed as a Markov decision process, where a sequence of actions is computed from the current state for the length of the prediction horizon. 

In terms of related work on counter-example-guided retraining, Dreossi et al.~\cite{Dreossi2018} propose an approach called \emph{counter-example guided data augmentation} to improve the performance of machine learning models. They use synthetically generated data items that are misclassified by the ML model to augment the training data sets. In \cite{Carr2019}, the authors use counter-example guided retraining as part of their strategy for synthesizing partially observable Markov decision processes (POMDPs). Claviere et al.~\cite{Claviere2019} use counter-example guided training for trajectory-tracking control of robotic vehicles.

The CEGkR retraining approach shares the same high-level philosophy underlying these approaches, but there are subtle differences in the way counter-examples are generated. In~\cite{Claviere2019}, counter-examples are generated using falsification of desired temporal properties about a closed-loop system, whereas in our approach, safety constraints, if any, are included in the cost function and {\em{multiple}} retraining data points are generated from {\em{one}} counterexample. Further, our goal for retraining is to learn a distributed controller, rather than an NN representation of an existing controller.
%falsification process is more time-consuming and resource-consuming.  Unlike previous works, we are not using ML to just learn a computationally fast implementation of a known control algorithm; we are using learning to discover a distributed procedure.

In terms of deep-learning methodologies for synthesizing distributed controllers, deep reinforcement learning is used in~\cite{CondeLT17} for designing controllers for UAVs that reach time-varying formations. They also use a DNN to estimate how good a state is, so the agent can choose actions accordingly. Deep reinforcement learning is also used in~\cite{Yang2018} to generate a controller for UAVs in uncertain environments. As the multi-agent learning efficiency is constrained by the high-dimensional and continuous action spaces, a methodology is presented in~\cite{Yang2018} to slice the action spaces into a number of tractable fractions to achieve efficient convergences of optimal policies in continuous domains. Graph neural networks are deployed in~\cite{tolstaya2019} to learn a distributed controller for a drone swarm capable of achieving flocking formation. The learned controller, which is synthesized by imitating the policy of a centralized controller, exploits information from distant teammates using only local communication interchanges.
%The learned controller only requires local information and communications at test time by imitating the policy of centralized controllers using global information at training time.
\section{Conclusion}
\label{sec:con}

We have presented a new learning-based approach for designing distributed controllers that uses centralized controllers to generate the training data, in a teacher-learner fashion.  The data generated by a centralized controller undergoes a transformation to yield the requisite training data, a transformation defined by the information available to an agent in the distributed setting. During training, we use counterexample-guided $k$-fold retraining to generate additional data points to train the distributed controller. We demonstrated the power of this approach by developing a distributed neural controller for the V-formation problem, and used Statistical Model Checking to reason about the controller's correctness.

The V-formation problem is particularly challenging. We showed that a symmetric deterministic distributed controller does not exist under certain reasonable assumptions. 
% Section 3 shows that designing such controllers is impossible in certain cases and tricky in general.
This motivates the use of a data-driven approach to automatically synthesize such controllers. The general idea of learning distributed controllers from training data generated by centralized controllers is promising. We believe that our approach will generalize to any distributed control synthesis problem whose objective is specified by a state-based cost function.  Investigating its performance on other applications, and exploring enhancements to our learning-based approach to distributed controller design are directions for future work. 
%We will also explore the possibilities of directly incorporating SMC to retrain better neural controllers where SMC could be interleaved with rounds of retraining to determine when a sufficient amount of retraining has been performed in order to achieve the desired statistical performance guarantees. 

\bibliographystyle{ieeetr}
\bibliography{biblio}

\begin{thebibliography}{10}

\bibitem{CAMACHO2007}
E.~F. Camacho and C.~Bordons~Alba, {\em Model Predictive Control}.
\newblock Springer, 2007.

\bibitem{ARES}
A.~Lukina, L.~Esterle, C.~Hirsch, E.~Bartocci, J.~Yang, A.~Tiwari, S.~A.
  Smolka, and R.~Grosu, ``{ARES}: Adaptive receding-horizon synthesis of
  optimal plans,'' in {\em Proceedings, Part II, of the 23rd International
  Conference on Tools and Algorithms for the Construction and Analysis of
  Systems - Volume 10206}, (Berlin, Heidelberg), pp.~286--302, Springer-Verlag,
  2017.

\bibitem{Lukina2019}
A.~Lukina, A.~Tiwari, S.~A. Smolka, and R.~Grosu, ``Distributed
  adaptive-neighborhood control for stochastic reachability in multi-agent
  systems,'' in {\em Proceedings of the 34th ACM/SIGAPP Symposium on Applied
  Computing}, SAC '19, (Limassol, Cyprus), pp.~914--921, ACM, 2019.

\bibitem{Larsen14}
K.~G. Larsen and A.~Legay, ``Statistical model checking: Past, present, and
  future,'' in {\em 6th International Symposium, ISoLA 2014}, (Corfu, Greece),
  Oct. 2014.

\bibitem{Grosu14}
R.~Grosu, D.~Peled, C.~R. Ramakrishnan, S.~A. Smolka, S.~D. Stoller, and
  J.~Yang, ``Using statistical model checking for measuring systems,'' in {\em
  6th International Symposium, ISoLA 2014}, (Corfu, Greece), Oct. 2014.

\bibitem{kennedy95}
J.~Kennedy and R.~C. Eberhart, ``Particle swarm optimization,'' in {\em
  Proceedings of the IEEE International Conference on Neural Networks}, (Perth,
  Australia), pp.~1942--1948, IEEE, 1995.

\bibitem{YANG2016}
J.~Yang, R.~Grosu, S.~A. Smolka, and A.~Tiwari, ``{Love thy neighbor:
  V-formation as a problem of model predictive control (Invited Paper)},'' in
  {\em 27th International Conference on Concurrency Theory (CONCUR 2016)},
  vol.~59 of {\em Leibniz International Proceedings in Informatics (LIPIcs)},
  (Dagstuhl, Germany), pp.~4:1--4:5, Schloss Dagstuhl--Leibniz-Zentrum fuer
  Informatik, 2016.

\bibitem{Attiya2004}
H.~Attiya and J.~Welch, {\em Distributed Computing: Fundamentals, Simulations
  and Advanced Topics}.
\newblock Wiley, 2004.

\bibitem{adamopt}
D.~P. Kingma and J.~Ba, ``Adam: {A} method for stochastic optimization,'' in
  {\em 3rd International Conference on Learning Representations, {ICLR} 2015,
  San Diego, CA, USA, May 7-9, 2015, Conference Track Proceedings}, 2015.

\bibitem{chollet2015keras}
F.~Chollet {\em et~al.}, ``Keras.'' \url{https://keras.io}, 2015.

\bibitem{Thomas2004}
T.~H{\'e}rault, R.~Lassaigne, F.~Magniette, and S.~Peyronnet, ``Approximate
  probabilistic model checking,'' in {\em Verification, Model Checking, and
  Abstract Interpretation} (B.~Steffen and G.~Levi, eds.), (Berlin,
  Heidelberg), pp.~73--84, Springer Berlin Heidelberg, 2004.

\bibitem{fb-jc-sm:08c}
F.~Bullo, J.~Cortes, and S.~Martinez, ``Distributed algorithms for robotic
  networks,'' in {\em Encyclopedia of Complexity and Systems Science},
  Springer, 2009.

\bibitem{jwd-af-fb:10v}
J.~W. Durham, A.~Franchi, and F.~Bullo, ``Distributed pursuit-evasion without
  global localization via local frontiers,'' {\em Autonomous Robots}, vol.~32,
  no.~1, pp.~81--95, 2012.

\bibitem{dunbar2006}
W.~B. Dunbar and R.~M. Murray, ``Distributed receding horizon control for
  multi-vehicle formation stabilization,'' {\em Automatica}, vol.~42, no.~4,
  pp.~549--558, 2006.

\bibitem{dunbar2004}
W.~B. Dunbar, {\em Distributed receding horizon control of multiagent systems}.
\newblock PhD thesis, California Institute of Technology, 2004.

\bibitem{Balas2006}
T.~Keviczky, F.~Borrelli, and G.~J. Balas, ``Decentralized receding horizon
  control for large scale dynamically decoupled systems,'' {\em Automatica},
  vol.~42, pp.~2105--2115, 2006.

\bibitem{D'Andrea2003}
R.~{D'Andrea} and G.~E. {Dullerud}, ``Distributed control design for spatially
  interconnected systems,'' {\em IEEE Transactions on Automatic Control},
  vol.~48, pp.~1478--1495, Sep. 2003.

\bibitem{Fowler2002}
J.~M. {Fowler} and R.~{D'Andrea}, ``Distributed control of close formation
  flight,'' in {\em Proceedings of the 41st IEEE Conference on Decision and
  Control, 2002.}, vol.~3, (Las Vegas, USA), pp.~2972--2977, IEEE, Dec 2002.

\bibitem{Ye2017}
D.~Ye, J.~Zhang, and Z.~Sun, ``Extended state observer–based finite-time
  controller design for coupled spacecraft formation with actuator
  saturation,'' {\em Advances in Mechanical Engineering}, vol.~9, no.~4, 2017.

\bibitem{Moreno2017}
G.~A. {Moreno}, A.~V. {Papadopoulos}, K.~{Angelopoulos}, J.~{Cámara}, and
  B.~{Schmerl}, ``Comparing model-based predictive approaches to
  self-adaptation: Cobra and pla,'' in {\em 2017 IEEE/ACM 12th International
  Symposium on Software Engineering for Adaptive and Self-Managing Systems
  (SEAMS)}, (Buenos Aires, Argentina), pp.~42--53, IEEE, May 2017.

\bibitem{Moreno2015}
G.~A. Moreno, J.~C\'{a}mara, D.~Garlan, and B.~Schmerl, ``Proactive
  self-adaptation under uncertainty: A probabilistic model checking approach,''
  in {\em Proceedings of the 2015 10th Joint Meeting on Foundations of Software
  Engineering}, ESEC/FSE 2015, (New York, NY, USA), pp.~1--12, ACM, 2015.

\bibitem{Dreossi2018}
T.~Dreossi, S.~Ghosh, X.~Yue, K.~Keutzer, A.~Sangiovanni-Vincentelli, and S.~A.
  Seshia, ``Counterexample-guided data augmentation,'' in {\em Proceedings of
  the 27th International Joint Conference on Artificial Intelligence},
  IJCAI'18, (Stockholm, Sweden), pp.~2071--2078, AAAI Press, 2018.

\bibitem{Carr2019}
S.~Carr, N.~Jansen, R.~Wimmer, A.~C. Serban, B.~Becker, and U.~Topcu,
  ``Counterexample-guided strategy improvement for pomdps using recurrent
  neural networks,'' in {\em Proceedings of the 28th International Joint
  Conference on Artificial Intelligence}, IJCAI'19, (Macao, China),
  pp.~5532--5539, AAAI Press, 2019.

\bibitem{Claviere2019}
A.~Claviere, S.~Dutta, and S.~Sankaranarayanan, ``Trajectory tracking control
  for robotic vehicles using counterexample guided training of neural
  networks,'' {\em International Conference on Automated Planning and
  Scheduling (ICAPS)}, pp.~680--688, Jan 2019.

\bibitem{CondeLT17}
R.~Conde, J.~R. Llata, and C.~Torre{-}Ferrero, ``Time-varying formation
  controllers for unmanned aerial vehicles using deep reinforcement learning,''
  {\em CoRR}, vol.~1706.01384, 2017.

\bibitem{Yang2018}
B.~Yang and M.~Liu, ``Keeping in touch with collaborative {UAV}s: A deep
  reinforcement learning approach,'' in {\em Proceedings of the Twenty-Seventh
  International Joint Conference on Artificial Intelligence, {IJCAI-18}},
  pp.~562--568, International Joint Conferences on Artificial Intelligence
  Organization, 7 2018.

\bibitem{tolstaya2019}
E.~Tolstaya, F.~Gama, J.~Paulos, G.~Pappas, V.~Kumar, and A.~Ribeiro,
  ``Learning decentralized controllers for robot swarms with graph neural
  networks,'' 2019.

\end{thebibliography}
\end{document}